\documentclass[a4paper, USenglish]{lipics-v2018}
\bibliographystyle{plainurl}
\usepackage{amsmath,amsfonts,amssymb}

\usepackage{amscd,amsmath}
\usepackage{amssymb,amsmath,latexsym}

\usepackage{graphicx}

\title{Interval-Like Graphs and Digraphs}

\author{Pavol Hell}{School of Computing Science, Simon Fraser University\\{[Burnaby, B.C., Canada V5A 1S6]}}{pavol@sfu.ca}{}{The author was supported by an NSERC (Canada) Discovery Grant}

\author{Jing Huang}{Mathematics and Statistics, University of Victoria\\{[Victoria, B.C., Canada V8W 2Y2]}} {huangj@uvic.ca}{}{}

 \author{Ross M. McConnell}{Computer Science Department, Colorado State University\\{[Fort Collins, CO 80523-1873}}{rmm@cs.colostate.edu}{}{}

 \author{Arash Rafiey}{Mathematics and Computer Science, Indiana State University\\{[Terre Haute, IN 47809]}}{arash.rafiey@indstate.edu}{}{}

 \keywords{Graph theory}

\subjclass{Graph classes, interval graphs, interval bigraphs}

\authorrunning{P. Hell, J. Huang, R.~M. McConnell, A. Rafiey}

\Copyright{Pavol Hell, Jing Huang, Ross M. McConnell, Arash Rafiey}

\newtheorem{proposition}[theorem]{Proposition}

\EventEditors{Igor Potapov, Paul Spirakis, and James Worrell}
\EventNoEds{3}
\EventLongTitle{43rd International Symposium on Mathematical Foundations of Computer Science (MFCS 2018)}
\EventShortTitle{MFCS 2018}
\EventAcronym{MFCS}
\EventYear{2018}
\EventDate{August 27--31, 2018}
\EventLocation{Liverpool, GB}
\EventLogo{}
\SeriesVolume{117}
\ArticleNo{} 
\nolinenumbers 

\begin{document}

\maketitle

\begin{abstract}
We unify several seemingly different graph and digraph classes under one umbrella.
These classes are all, broadly speaking, different generalizations of interval graphs, and include, in addition 
to interval graphs, adjusted interval digraphs, threshold graphs, complements of threshold tolerance 
graphs (known as `co-TT' graphs), bipartite interval containment graphs, bipartite co-circular arc  graphs, 
and two-directional orthogonal ray graphs. (The last three classes coincide, but have been investigated in 
different contexts.) This common view is made possible by introducing reflexive relationships (loops)
into the analysis.  We also show that all the 
above classes are united by a common ordering characterization, the existence of a min ordering. We 
propose a common generalization of all these graph and digraph classes, namely {\em signed-interval 
digraphs}, and show that they are precisely the digraphs that are characterized by the existence of a min 
ordering. We also offer an alternative geometric characterization of these digraphs. For most of the above 
graph and digraph classes, we show that they are exactly those signed-interval digraphs that satisfy 
a suitable natural restriction on the digraph, like having a loop on every vertex, or having a symmetric edge-set, or being 
bipartite. For instance, co-TT graphs are precisely those signed-interval digraphs that have each edge 
symmetric. We also offer some discussion of future work on recognition algorithms and characterizations.
\end{abstract}

\section{Introduction}

A digraph $H$ is {\em reflexive} if each $vv \in E(H), v \in V(H)$ ($H$ has all loops); {\em irreflexive} if
no $vv \in E(H)$ ($H$ has no loops); and {\em symmetric} if $ab \in E(H)$ implies $ba \in E(H)$.
In this paper, we shall treat both graphs and digraphs; for simplicity we view graphs as symmetric digraphs.
(Thus, graphs can have loops, and irreflexive graphs are loopless.) Loops play an important role in this 
paper, and this is not common in the literature on graph classes that we consider. They allow us to view several seemingly unrelated 
graph classes through a common lens.

A {\em min ordering} of a digraph $H$ is a 
linear ordering $<$ of the vertices of $H$, so that  $ab \in E(H), a'b' \in E(H)$ and $a < a', b' < b$ 
implies that $ab' \in E(H)$.  In other words, a min ordering is an ordering of the vertices such that when the rows and columns of the adjacency matrix are
ordered in this way,
neither the matrix whose rows are $01$ and $11$  nor the matrix whose rows are $01$ and $10$ appears as a submatrix.  (See Figure~\ref{fig:minOrdering}.)
Note that the presence or absence of loops (1's on the diagonal of the adjacency matrix) can affect whether
the graph has a min ordering.

\begin{figure}
	\centerline{\includegraphics[]{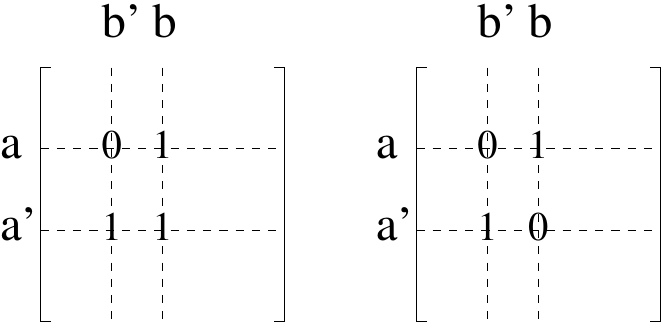}}
	\caption{A min ordering of a digraph is an ordering of the vertices
	such that neither of the depicted submatrices occurs in the corresponding
	adjacency matrix.}\label{fig:minOrdering}
\end{figure}

Our goal in this paper is to promote a class of digraphs (or 0,1-matrices) that is a broad generalization of 
interval graphs and that retains some of the desirable structural properties of interval graphs.  A graph $H$ is an {\em 
interval graph} if it is the intersection graph of a family of intervals on the real line, i.e., if there exists
a family of intervals $\{[x_v,y_v]| v \in V(H)\}$ such that $uv \in E(H)$ if and only if $[x_u,y_u] \cap [x_v,y_v] \neq \emptyset$.
The family of intervals is
an {\em interval model} of $H$. (See Figure~\ref{fig:intvlgraph}.)  We note that the definition implies that an interval graph is 
reflexive.  A related concept for bipartite graphs is as follows. A bipartite graph $H$ with parts $A, B$ is 
an {\em interval bigraph} if there are intervals $\{[x_a,y_a], a \in A\}$, and $\{[x_b,y_b], b \in B\}$, such that 
for $a \in A$ and $b \in B$, $ab \in E(H)$ 
if and only if $[x_a,y_a] \cap [x_b,y_b] \neq \emptyset$.
 
\begin{figure}
	\centerline{\includegraphics[]{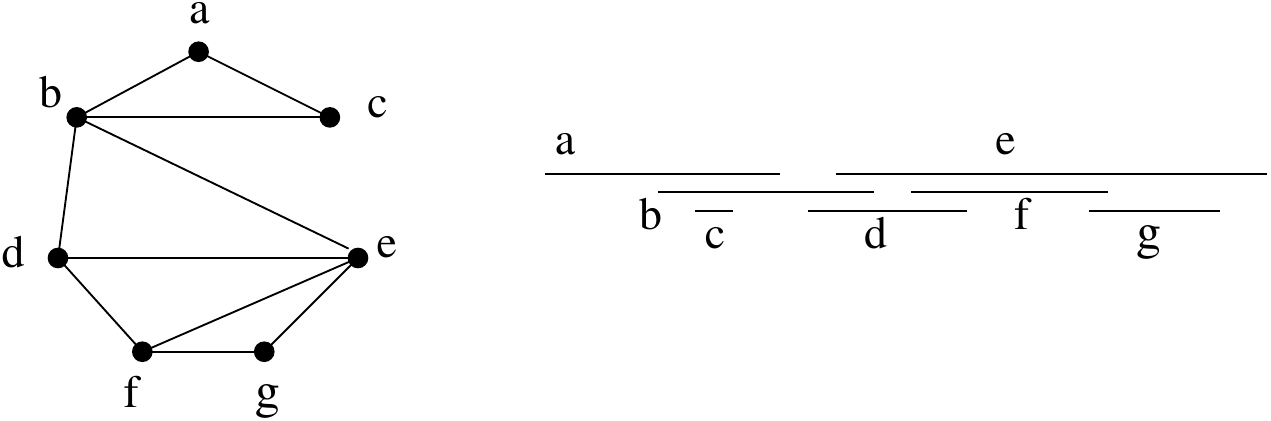}}
	\caption{An interval graph and corresponding interval model.  There is
	an implicit loop at each vertex.}\label{fig:intvlgraph}
\end{figure}

Interval graphs are important in graph theory and in applications, and are distinguished by several elegant
characterizations and efficient recognition algorithms \cite{booth,fh,fg,gol,habib,lekk,spin}. For this reason,
there have been attempts to extend the concept to digraphs \cite{sen-west}, with mixed success. (Many of 
the desirable structural properties are absent.) More recently a more restricted class of digraphs has been found to offer a nicer generalization 
of interval graphs; these are the adjusted interval digraphs \cite{adj}. A digraph $H$ is an {\em adjusted 
interval digraph} if there are two families of real intervals,
the {\em source intervals} $\{[x_v,y_v]| v \in V(H)\}$ and the {\em sink intervals}
and $\{[x_v,z_v]| v \in V(H)\}$ such that $uv \in E(H)$ 
if and only if the source interval for $u$ intersects the sink interval for $v$.
(See Figure~\ref{fig:adjInterval}.)
This differs from the class in~\cite{sen-west} in that the left endpoint, $x_v$, must be shared
by the two intervals $[x_v, y_v]$ and $[x_v,z_v]$ assigned to $v$; they are
``adjusted.''
The interval graphs are the special case where
$[x_v, y_v] = [x_v,z_v]$ for each $v \in V(H)$.
An {\em adjusted interval model} of $H$ is a set of source and sink intervals
that represent $H$ in this way.

\begin{figure}
	\centerline{\includegraphics[]{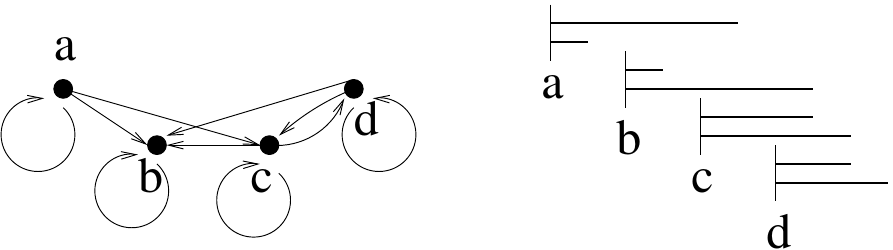}}
	\caption{An adjusted interval digraph and a corresponding adjusted interval model.
	The source interval for each vertex is the upper one.}\label{fig:adjInterval}
\end{figure}

Adjacency on a set of intervals can also be defined by interval containment.  A 
graph is a {\em containment graph of intervals} \cite{spin} if there is a family of intervals 
$\{[x_v,y_v]|v \in V(H)\}$ on the real line such that $uv \in E(H)$
if and only if one of $[x_u,y_u]$ and $[x_v,y_v]$ contains the other. A graph is a containment
graph of intervals if and only if it and its complement are both {\em transitively orientable}, thus if and only if it is a 
{\em permutation graph} \cite{spin}. 

For this paper, a more relevant class is a bipartite version of this concept. 
A bipartite graph 
$H$ with parts $A, B$ is an {\em interval containment bigraph} \cite{spin} if there are sets of intervals 
$\{I_a| a \in A\}$, and 
$\{J_b| b \in B$\}, such that $ab \in E(H)$ if and only if $J_b \subseteq I_a$. These graphs have been studied,
from the point of view of another geometric representation, as two-directional orthogonal ray graphs \cite{ueno}. 
A bipartite graph $H$ with parts $A$ and $B$ is called a {\em two-directional orthogonal ray graph} if there exists 
a set $\{U_a, a \in A\}$ of upwards vertical rays, and a set $\{R_b, b \in B\}$ of horizontal rays to the right such that 
$ab \in E(H)$ if and only if $U_a \cap R_b \neq \emptyset$. It is known that a bipartite graph is an interval 
containment graph if and only if it is a two-directional orthogonal ray graph \cite{jingx}, and if and only if its
complement is a circular arc graph \cite{fhh}. 

It is sometimes convenient to view bipartite graphs as digraphs, with all edges oriented from part $A$ to 
part $B$; thus we speak of a {\em bipartite interval containment digraph}, a {\em bipartite interval digraph}, 
or a {\em two-directional orthogonal ray digraph}. In general, a {\em bipartite digraph} is a bipartite graph 
with parts $A$ and $B$ and all arcs being oriented from $A$ to $B$.

There is an interesting intermediate concept that uses both intersection and containment of intervals to define 
adjacency.
An interval model of an interval graph $G$ can be viewed as
two mappings $\{v \to x_v| v \in V(H)\}$ and $\{v \to y_v| v \in V(H)\}$ such that $x_v \leq y_v$
for each $v \in V(H)$, and such that $uv \in E(H)$ if and only if $y_v \leq x_u$ and $y_u \leq x_v$.  
The constraint $x_v \leq y_v$ comes from the need for $[x_v,y_v]$ to be an interval.
The proposition that two intervals intersect is the same as $x_v \leq y_u$ and $x_u \leq y_v$, since this
means that neither interval lies entirely to the right of the other.

A generalization of interval models is obtained by dropping the constraint $x_v \leq y_v$ in this formulation.
To develop the motivation for this, we start with the complements of threshold tolerance graphs. A graph $H$ is a {\em threshold tolerance 
graph} \cite{bruce} if its vertices $v$ can be assigned weights $w_v$ and tolerances $t_v$ so that $ab$ is an 
edge of $H$ if and only if $w_a+w_b > t_a$ or $w_a+w_b > t_b$. (When all $t_v$ are equal, this defines a better 
known class of {\em threshold graphs} \cite{hammer}.) {\em Co-threshold tolerance (`co-TT') graphs} are complements of
threshold tolerance graphs. Equivalently, a graph $H$ is a co-TT graph, if 
there exist real numbers $x_v, y_v, v \in V(H)$, such that $ab \in E(H)$ if and only if $x_a \leq y_b$ and 
$x_b \leq y_a$~\cite{split}.  This differs from the definition of interval graphs in that it is no longer
required that $x_v \leq y_v$, illustrating the motivation for dropping the constraint
in this case.  (See Figure~\ref{fig:cottExample}.)
That these are precisely the co-TT graphs is easily seen by letting $x_v=w_v$ and $y_v=t_v-w_v$. The two mappings $v \to x_v$ 
and $v \to y_v$, are called the {\em co-TT model} of $H$. 

\begin{figure}
	\centerline{\includegraphics[]{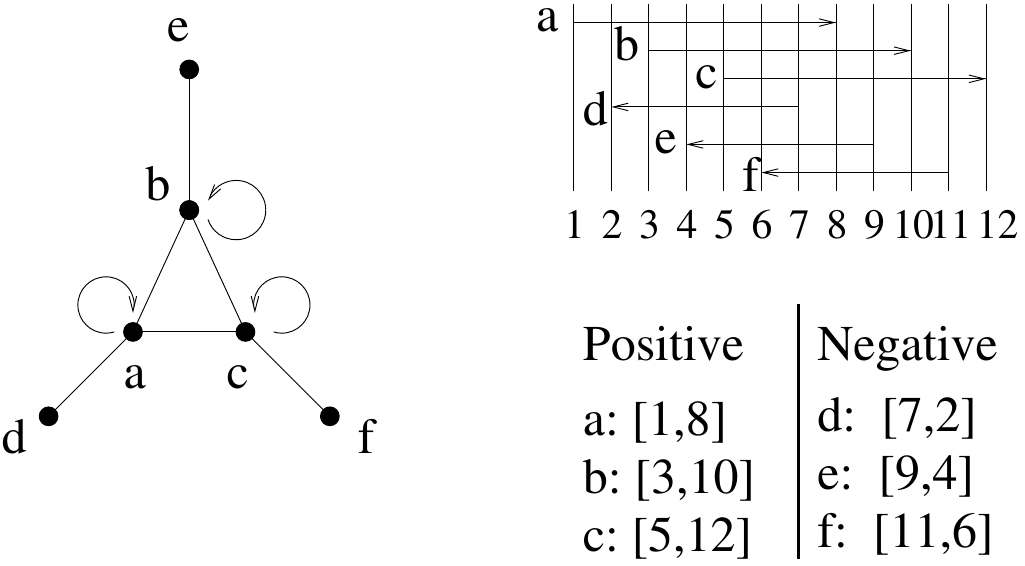}}
	\caption{A co-TT graph and a corresponding co-TT model; $ab$ is an
	edge since $1 \leq 10$ and $3 \leq 8$, $ad$ is an edge since
	$1 \leq 2$ and $7 \leq 8$.  However, $bd$ is not an edge: although
	$7 \leq 10$, $3$ is not less than or equal to $2$.  
	The example of this figure is one of the well-known minimal
	graphs that are not interval graphs, illustrating that
	the interval graphs are a proper subclass of the co-TT graphs.}\label{fig:cottExample}
\end{figure}

One view of a co-TT model is that
there are now intervals whose `beginning' $x_v$ may
come after their `end' $y_v$. In other words, we may have `intervals' $[x_v,y_v]$ with $y_v < x_v$. We may 
view a co-TT model as consisting of intervals $[x_v,y_v], v \in V(H),$ some of which go in the positive direction
(have $x_v \leq y_v$) and others go in the negative direction (have $x_v > y_v$). We speak of {\em positive}
or {\em negative} intervals, and {\em positive} or {\em negative} vertices that correspond to them. 
(In the literature \cite{golovach,split,ross,jing,bruce},
the direction is denoted by {\em colors} of the intervals: positive intervals, and vertices, are 
colored {\em blue}, and negative intervals, and vertices, are colored {\em red}.) The above definition of adjacency 
has interesting consequences. Two positive vertices are adjacent if and only if they intersect; in particular, each positive 
 vertex has a loop. Two negative vertices are never adjacent; in particular negative vertices have no loops. Finally, a positive
 vertex $u$ corresponding to a positive interval $[a,b]$ and a negative vertex $v$ corresponding to a negative interval $[c,d]$
are adjacent if and only if $[d,c]$ is contained in $[a,b]$ (i.e., $a \leq d \leq c \leq b$). We also use the following
signed shorthand, which will be useful later: a positive vertex or interval will be called a {\em $+$-vertex} or
{\em $+$-interval} respectively, and a negative vertex or interval will be called a {\em $-$-vertex} or
{\em $-$-interval} respectively. It follows from the above discussion that in a co-TT graph, the $+$-vertices
induce a reflexive interval graph, the $-$-vertices form an independent set, and the edges between the 
$+$-vertices and the $-$-vertices form a bipartite interval containment graph.

Note that co-TT graphs are a generalization of interval graphs; the interval graphs are those co-TT graphs where all vertices are positive.
In other words, they are the reflexive co-TT graphs.

\section{Signed Interval Digraphs}

We have now seen extensions of interval graphs in two different directions. First, by taking two (adjusted) intervals 
instead of just one interval, we were able to extend the definition from reflexive graphs to reflexive digraphs.
Second, by admitting intervals $[a,b]$ that go in the negative direction (have $b < a$), we were able to extend 
the definition from reflexive graphs to graphs that have some vertices with loops and others without. 
Both these generalizations have proved very fruitful \cite{adj,fh,adj,golovach,arxiv,split,ross,jing,bruce}. 

We now define a new class of digraphs that unifies these extensions.
A digraph $H$ is a {\em signed-interval digraph} if there exist three mappings from
$V(H)$ to the real line,
$v \to x_v, v \to y_v$, and $v \to z_v$, such that $uv \in E(H)$ if and only if $x_u \leq z_v$ and $x_v \leq y_u$.  
We call the three mappings $v \to x_v, v \to y_v$, and $x \to z_v$ a {\em signed-interval model} of $H$.
Alternatively, a signed interval model is obtained in by assigning,
for each $v \in V(H)$ a {\em source interval}
$[x_v,y_v]$ and a {\em sink interval} $[x_v,z_v]$, such that
$uv \in E(H)$ if and only if $x_u \leq z_v$ and $x_v \leq y_u$.  (See figure~\ref{fig:signedInterval}.)
Since it is possible that $x_v > y_v$ and/or $x_v > z_v$, each
of $[x_v,y_v]$ and $[x_v,z_v]$ can be negative or positive.  Since the source
interval and sink interval for $v$ share the endpoint $x_v$, we retain
the property that the intervals are adjusted.

Signed-interval digraphs with all intervals positive, are reflexive, and are adjusted 
interval digraphs. Signed-interval digraphs with $y_v = z_v$, for all $v \in V(H)$, are symmetric, and are
co-TT graphs. Signed-interval digraphs that satisfy both conditions, i.e., with all 
$x_v \leq y_v = z_v, v \in V(H)$, are interval graphs. Furthermore, we show below that there
are no reflexive signed-interval digraphs other than adjusted interval digraphs, no symmetric
signed-interval digraphs other than co-TT graphs, and no reflexive and symmetric signed-interval 
digraphs other than interval graphs.

\begin{figure}
	\centerline{\includegraphics[]{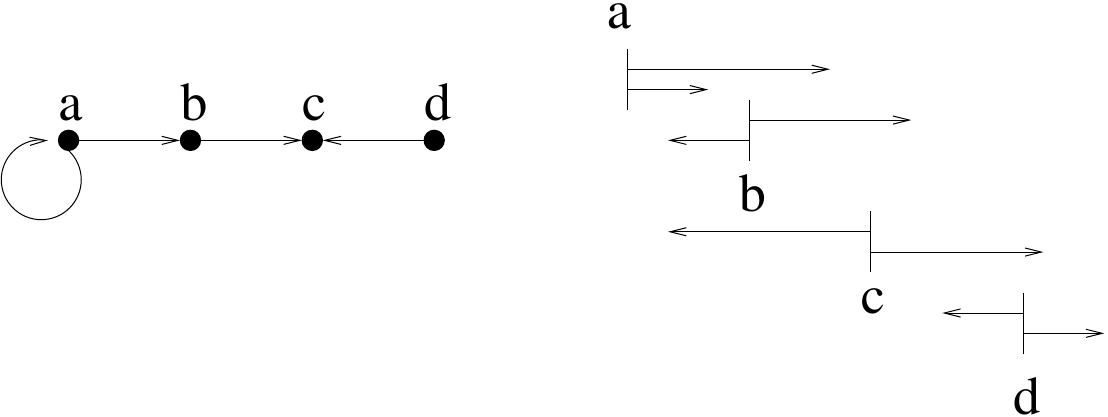}}
	\caption{A signed interval digraph and a corresponding signed interval model.  The source interval for each vertex is the upper one.  There is a loop
	at $a$ because its positive source interval intersects its positive
	sink interval.  There is an edge from
	$a$ to $b$ because $a$'s positive source interval contains $b$'s
	negative sink interval, an edge from $b$ to $c$ because
	$b$'s positive source interval intersects $c$'s positive sink
	interval, and an edge from $d$ to $c$ because $d$'s negative
	source interval is contained in $c$'s positive sink interval.}~\label{fig:signedInterval}
\end{figure}

The structure of signed-interval digraphs can be described in a language similar to what was used
for co-TT graphs. Let $H$ be a signed-interval digraph and consider a signed-interval 
model of $H$ given by the ordered pairs $(I_v,J_v)$ of intervals where $I_v = [x_v,y_v]$ 
and $J_v = [x_v,z_v]$. For $\alpha, \beta \in \{+,-\}$, we say a vertex $v$ is of type $(\alpha,\beta)$ if 
$I_v$ is an $\alpha$-interval and $J_v$ is a $\beta$-interval. The subdigraph of $H$ induced by 
$(+,+)$-vertices is an adjusted interval digraph. The $(-,-)$-vertices of $H$ form an independent set. 
The arcs between the  $(+,-)$- and $(-,-)$-vertices form a bipartite interval containment digraph. 
The arcs between the $(-,+)$- and $(-,-)$-vertices also form a bipartite interval containment digraph.
Similar properties hold for the other parts and their connections. 

We emphasize that our definition of co-TT graphs differs from the standard definition \cite{golovach,split,bruce}. 
In the standard definition, the condition $ab \in E(H) \iff  x_a \leq y_b$ and $x_b \leq y_a$ is applied only for 
$a \neq b$, and so the graphs have no loops. Thus a graph under the standard interpretation is co-TT if and only 
if with a suitable addition of loops it is co-TT under our definition above. This difference is not important as it 
was shown in \cite{gol} that if a graph $H$ is co-TT (in the standard sense), then it has a co-TT model with 
negative intervals for all simplicial vertices without true twins and all other intervals positive. Thus there is an easy 
translation between the co-TT graphs as defined here and the standard irreflexive co-TT graphs:
namely, loops are to be placed on all vertices other than simplicial vertices without true twins.


\section{Min Orderings}

Interval graphs, adjusted interval digraphs, co-TT graphs, and two-directional 
orthogonal ray digraphs all have min orderings when care is taken to specify which vertices have loops 
and which do not.~\cite{fh,adj,esa2012,ueno}.  

Min orderings are a useful tool for graph homomorphism problems.
A {\em homomorphism} of a digraph $G$ to a digraph $H$ is a mapping $f: V(G) \to V(H)$ such that 
$f(u)f(v) \in E(H)$ whenever $uv \in E(G)$. If a digraph $H$ has a min ordering, there is a simple
polynomial-time algorithm to decide if a given input graph $G$ admits a homomorphism to a fixed
digraph $H$ \cite{emmo,hombook}. In fact, the algorithm is well known in the AI community as the
arc-consistency algorithm \cite{hombook}; it is easy to see that it also solves {\em list homomorphism}
problems, where we seek a homomorphism of input $G$ to fixed $H$ taking each vertex of $G$ to
one of a `list' of allowed images. In fact, many (but not all) homomorphism and list homomorphism
problems that can be solved in polynomial time can be solved using arc-consistency with respect to 
a min ordering.

Graph and digraph homomorphism problems are special cases of constraint 
satisfaction problems. A general tool for solving polynomial time solvable 
constraint satisfaction problems are the so-called polymorphisms~\cite{buljeav}. 
Without going into the technical details, we mention that min-orderings 
are equivalent to conservative semilattice polymorphisms~\cite{adj}.

We prove below that a digraph has a min ordering if and only if it is a signed-interval digraph.
We also give another geometric characterization of signed-interval digraphs, as bi-arc
digraphs. We show that a {\em reflexive} signed-interval digraphs are precisely adjusted 
interval digraphs, that {\em symmetric} signed-interval digraphs are precisely co-TT graphs, 
that {\em reflexive and symmetric} signed-interval digraphs are precisely interval graphs, 
and that {\em bipartite} signed-interval digraphs are precisely two-directional ray graphs.

The main result of this section is the following.

\begin{theorem}\label{main}
A digraph admits a min ordering if and only if it is a signed-interval digraph.
\end{theorem}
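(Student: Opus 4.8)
The plan is to prove the two implications separately, relying throughout on the ``forbidden submatrix'' reformulation of a min ordering from the introduction: once the adjacency matrix is written in the order $<$, a min ordering is exactly the absence of indices $i<i'$ and $j<j'$ carrying a $1$ in positions $(i,j')$ and $(i',j)$ but a $0$ in position $(i,j)$ (the two patterns with first row $01$ and second row $10$ or $11$).

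For the direction ``signed-interval $\Rightarrow$ min ordering,'' I would simply order the vertices by their $x$-coordinates, breaking ties arbitrarily, so that $a<a'$ forces $x_a\le x_{a'}$. Given edges $ab$ and $a'b'$ with $a<a'$ and $b'<b$, the goal is $ab'\in E(H)$, that is, $x_a\le z_{b'}$ and $x_{b'}\le y_a$; these follow at once from $x_a\le x_{a'}\le z_{b'}$ and $x_{b'}\le x_b\le y_a$. The point is structural: each of the two inequalities defining an edge $uv$ pairs the source datum $x_u$ (resp.\ $y_u$) with a sink datum, so monotonicity in the $x$-coordinate passes through both constraints. This direction is routine.

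The substance is in ``min ordering $\Rightarrow$ signed-interval.'' Starting from a min ordering $v_1<\cdots<v_n$ with position function $\pi$, I would set $x_{v_i}=i$ and define
\[
y_v=\max\{\pi(w): vw\in E(H)\},\qquad z_v=\max\{\pi(w): wv\in E(H)\},
\]
taking an empty maximum to be $0$. One then checks that $uv\in E(H)$ iff $x_u\le z_v$ and $x_v\le y_u$. The forward direction is immediate from the definitions of $y_u$ and $z_v$. For the converse, assuming $\pi(u)\le z_v$ and $\pi(v)\le y_u$, the thresholds are realized by an out-neighbor $w$ of $u$ with $\pi(w)=y_u\ge\pi(v)$ and an in-neighbor $w'$ of $v$ with $\pi(w')=z_v\ge\pi(u)$; if either inequality is an equality then $uv\in E(H)$ outright, and otherwise $u<w'$ and $v<w$ exhibit edges $uw$ and $w'v$ which, in the absence of the edge $uv$, would be precisely the forbidden pattern.

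I expect this converse to be the only real obstacle: it is the single place where the min-ordering hypothesis is used, and it is used exactly in its forbidden-submatrix form. The one technical nuisance is the bookkeeping for vertices with no out- or in-neighbors, handled by the convention that an empty maximum is smaller than every position, so that the corresponding threshold inequality is never satisfied and is consistent with the absence of edges.
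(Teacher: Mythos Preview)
Your proposal is correct and is essentially the same proof as the paper's: the paper also orders by the $x_v$'s for one direction, and for the other sets $x_v$ to the position of $v$, $y_v=O(v)$ (the last out-neighbor, or a prepended sentinel $\alpha$), and $z_v=I(v)$ (the last in-neighbor, or $\alpha$). The only cosmetic difference is that the paper factors your inline verification of the biconditional $uv\in E(H)\iff x_u\le z_v\text{ and }x_v\le y_u$ into a separate Proposition~\ref{nolem}, whose proof is exactly your ``realized threshold plus forbidden pattern'' argument.
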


Before embarking on the proof we offer an alternate definition of a min ordering. Consider any linear ordering $<$ 
of $V(H)$.  To this ordering, we prepend an intial element $\alpha$,
which is a place holder and not a vertex.  Thus, $\alpha < x$ for
each vertex $x$.  We denote by $O(a)$ the last 
vertex $b$ (in the order $<$), such that $b$ is an out-neighbor of $a$ 
(i.e., such that $ab \in E(H)$), or $\alpha$ if $a$ has no out-neighbor.
Similarly, for each vertex $b$, 
we denote by $I(b)$ the last vertex $a$ such 
that $a$ is an in-neighbor of $b$ (i.e., such that $ab \in E(H)$),
or $\alpha$ if $a$ has no in-neighbor.

\begin{proposition}\label{nolem}
A linear ordering $<$ of $V(H)$ is a min ordering of a digraph $H$ if and only if the following property holds:

\begin{center}
 $ab \in E(H)$ if and only if $a \leq I(b)$ and $b \leq O(a)$.
\end{center}
\end{proposition}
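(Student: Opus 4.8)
The plan is to split the displayed biconditional into its two implications and to observe that one of them is essentially free. For \emph{any} linear ordering $<$ of $V(H)$, if $ab \in E(H)$ then $a$ is an in-neighbor of $b$ and $b$ is an out-neighbor of $a$; since $I(b)$ and $O(a)$ are by definition the \emph{last} such neighbors in the order $<$, we immediately get $a \le I(b)$ and $b \le O(a)$. Thus the forward implication "$ab \in E(H) \implies a \le I(b)$ and $b \le O(a)$" holds regardless of whether $<$ is a min ordering, and all the genuine content of the proposition (in both directions) is concentrated in the converse implication "$a \le I(b)$ and $b \le O(a) \implies ab \in E(H)$." I would record this observation first and then use it on each side of the equivalence.

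For the "only if" direction (min ordering $\Rightarrow$ property) I would derive the converse implication from the min ordering condition. Assume $a \le I(b)$ and $b \le O(a)$. Because $\alpha$ precedes every vertex, $a \le I(b)$ already forces $I(b) \ne \alpha$, so $b$ has a genuine last in-neighbor $a^\ast := I(b)$ with $a^\ast b \in E(H)$ and $a \le a^\ast$; symmetrically $b^\ast := O(a)$ is a genuine vertex with $ab^\ast \in E(H)$ and $b \le b^\ast$. If $a = a^\ast$ or $b = b^\ast$ then $ab \in E(H)$ outright. Otherwise $a < a^\ast$ and $b < b^\ast$, and I would apply the min ordering condition to the two edges $ab^\ast$ and $a^\ast b$ (in the roles of the two edges of Figure~\ref{fig:minOrdering}): with $a < a^\ast$ and $b < b^\ast$ the condition produces exactly $ab \in E(H)$.

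For the "if" direction (property $\Rightarrow$ min ordering) I would verify the min ordering condition directly. Given edges $ab, a'b' \in E(H)$ with $a < a'$ and $b' < b$, I want $ab' \in E(H)$. Applying the free forward implication to the two given edges gives $b \le O(a)$ (from $ab \in E(H)$) and $a' \le I(b')$ (from $a'b' \in E(H)$). Chaining with the order relations yields $b' < b \le O(a)$ and $a < a' \le I(b')$, hence $b' \le O(a)$ and $a \le I(b')$, and the assumed converse implication then delivers $ab' \in E(H)$.

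The main obstacle is conceptual rather than computational: recognizing that the displayed biconditional is half-trivial and that its nontrivial half is precisely what the min ordering pattern encodes, together with getting the index bookkeeping right when substituting $ab^\ast$ and $a^\ast b$ into the min ordering condition. In particular one must check that $a \le I(b)$ rules out the placeholder $\alpha$, so that $a^\ast$ and $b^\ast$ are honest vertices and the two witnessing edges genuinely exist. Once that substitution is set up correctly, each direction reduces to a one-line chain of inequalities.
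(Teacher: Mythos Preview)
Your proposal is correct and follows essentially the same argument as the paper's own proof: the paper also notes that $ab \in E(H) \Rightarrow a \le I(b)$ and $b \le O(a)$ is immediate, then for the nontrivial direction sets $c=I(b), d=O(a)$ (your $a^\ast, b^\ast$), handles the equality cases first, and applies the min ordering condition to the edges $ad$ and $cb$; the converse direction is likewise proved by chaining $a < a' \le I(b')$ and $b' < b \le O(a)$. Your extra remark about ruling out the placeholder $\alpha$ is a welcome clarification the paper leaves implicit.
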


\begin{proof}
Suppose first that $<$ is a min ordering of $H$ with $\alpha$ prepended.
If $ab \in E(H)$, then by the definition of $O(a), I(b)$ we have $a \leq I(b)$ and $b \leq O(a)$.
On the other hand, let $a \leq I(b)$ and $b \leq O(a)$. Note that if $a=I(b)$ or $b=O(a)$ we have $ab \in E(H)$ also 
by definition. Therefore it remains to consider vertices $a, b$ such that $a < c=I(b)$ and $b < d=O(a)$. Then
$ad, cb \in E(H)$ and the min ordering property implies that $ab \in E(H)$. This proves the property.

Conversely, assume that $<$ is a linear ordering of $V(H)$ with $\alpha$ prepended
and that the property holds for $<$. We claim 
it is a min ordering of $H$. Otherwise some $ab \in E(H), a'b' \in E(H), a < a', b' < b$ would have 
$ab' \not\in E(H)$. This is a contradiction, since we have $a < a' \leq I(b')$ and $b' < b \leq O(a)$.
\end{proof}

We proceed to prove the theorem.

\begin{proof}
Suppose $<$ is a min ordering of a digraph $H$ with
$\alpha$ prepended.  We represent each vertex $v \in V(H)$ by the mappings 
$v \to v, v \to O(v), v \to I(v)$. In other words, $v$ is represented by the two intervals $[v,O(v)]$ and $[v,I(v)]$. 
It follows from Proposition \ref{nolem} that $ab \in E(H)$ if and only if $a \leq I(b)$ and $b \leq O(a)$. 
Thus 
$H$ is a signed-interval digraph.

Conversely, suppose we have the three mappings $v \to x_v, v \to y_v, v \to z_v$ from $V(H)$ to the real line, such that $ab \in E(H)$ if and 
only if $x_a \leq z_b$ and $x_b \leq y_a$. Without loss of generality we may assume the points $\{x_v| v \in V(H)\}$ 
are all distinct. Then we claim that the left to right ordering of the points $x_v$ yields a min ordering $<$ of $H$. 
(Specifically, we define $a < b$ if and only if $x_a$ precedes $x_b$.) Consider now $ab \in E(H), a'b' \in E(H),$
with $a < a', b' < b$. This means that $x_a < x_{a'} \leq z_{b'}$ and $x_{b'} < x_b \leq y_a$, whence we must
have $ab' \in E(H)$. 
\end{proof}

\section{An alternate geometric representation}

Digraphs that admit a min ordering have another 
geometric representation. Let $C$ be a circle with two distinguished points (the {\em poles}) $N$ and $S$, 
and let $H$ be a digraph. Let $I_v, v \in V(H)$ and $J_v, v \in V(H)$ be two families of arcs on $C$ such 
that each $I_v$ contains $N$ but not $S$, 
and each $J_v$ contains $S$ but not $N$. We say that the families 
$I_v$ and $J_v$ are {\em consistent} if they have the same clockwise order of their clockwise ends, i.e., the 
clockwise end of $I_a$ precedes in the clockwise order the clockwise end of $I_b$ if and only if the clockwise 
end of $J_a$ precedes in the clockwise order the clockwise end of $J_b$. Suppose two families $I_v, J_v$ 
are consistent; we define an ordering $<$ on $V(H)$ where $a < b$ if and only if the clockwise end of $I_a$ 
precedes in the clockwise order the clockwise end of $I_b$; we call $<$ the ordering {\em generated by} the
 consistent families $I_v, J_v$.

A {\em bi-arc model} of a digraph $H$ is a consistent pair of families of circular arcs, 
$I_v, J_v, v \in V(H)$, such that $ab \in E(H)$ if and only if $I_a$ and $J_b$ are disjoint. 
A digraph $H$ is called a {\em bi-arc digraph} if it has a bi-arc model.

\begin{theorem}\label{bia}
A digraph $H$ admits a min ordering if and only if it is a bi-arc digraph. 
\end{theorem}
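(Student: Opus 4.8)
The plan is to exploit the fact that a bi-arc model and a signed-interval model encode essentially the same data once the circle $C$ is coordinatized, together with Theorem~\ref{main}. First I would fix a clockwise angular coordinate $\theta\in[0,2\pi)$ on $C$ with $N$ at $\theta=0$ and $S$ at $\theta=\pi$, so that the open right half $R=(0,\pi)$ and open left half $L=(\pi,2\pi)$ are the two sides separated by the poles. An arc $I_v$ containing $N$ but not $S$ is then the clockwise arc through $N$ from a counterclockwise endpoint $p_v\in L$ to a clockwise endpoint $q_v\in R$, and an arc $J_v$ containing $S$ but not $N$ is the clockwise arc through $S$ from a counterclockwise endpoint $r_v\in R$ to a clockwise endpoint $s_v\in L$. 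The crux of the argument is the elementary observation that $I_a$ and $J_b$ are \emph{disjoint} if and only if they fail to overlap on each side separately, i.e. if and only if $q_a\le r_b$ (no overlap in $R$) and $s_b\le p_a$ (no overlap in $L$). I expect verifying this disjointness characterization, and keeping the four endpoints in their correct halves, to be the main thing to get right; everything else is bookkeeping.

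For the direction that a bi-arc digraph admits a min ordering, I would take the ordering $<$ generated by a consistent pair $I_v,J_v$, so that $a<b$ iff $q_a<q_b$, and check the min-ordering property directly. Given $ab,a'b'\in E(H)$ with $a<a'$ and $b'<b$, the disjointness characterization yields $q_a\le r_b$, $s_b\le p_a$, $q_{a'}\le r_{b'}$, and $s_{b'}\le p_{a'}$. From $a<a'$ I get $q_a<q_{a'}\le r_{b'}$, hence $q_a\le r_{b'}$; from $b'<b$, i.e. $q_{b'}<q_b$, \emph{consistency} converts the order of the clockwise ends of the $I$-arcs into that of the clockwise ends of the $J$-arcs, giving $s_{b'}<s_b\le p_a$, hence $s_{b'}\le p_a$. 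Thus $I_a$ and $J_{b'}$ are disjoint, so $ab'\in E(H)$, as required. The only nonroutine point here is recognizing that consistency is exactly the hypothesis needed to transfer ordering information from the $I$-family to the $J$-family.

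For the converse, I would invoke Theorem~\ref{main} to replace the min ordering by a signed-interval model $v\to x_v,y_v,z_v$ with $uv\in E(H)$ iff $x_u\le z_v$ and $x_v\le y_u$, and then realize this model on the circle. Fix any increasing bijections $\phi:\mathbb{R}\to R$ and $\psi:\mathbb{R}\to L$, and set $q_v=\phi(x_v)$, $r_v=\phi(z_v)$, $s_v=\psi(x_v)$, $p_v=\psi(y_v)$, defining $I_v$ and $J_v$ as the corresponding arcs through $N$ and through $S$. Because the arc through a pole is determined by its two endpoints irrespective of whether $x_v\le y_v$ or $x_v\le z_v$, no positivity constraint is needed, which is precisely why dropping $x_v\le y_v$ corresponds to passing from the line to the circle. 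Monotonicity of $\phi$ gives $q_a\le r_b\iff x_a\le z_b$, and monotonicity of $\psi$ gives $s_b\le p_a\iff x_b\le y_a$, so by the disjointness characterization $I_a\cap J_b=\emptyset$ iff $ab\in E(H)$. Finally, since $\phi$ and $\psi$ are increasing, the clockwise ends $q_v=\phi(x_v)$ and $s_v=\psi(x_v)$ induce the same order (the order of the $x_v$), so the families are consistent and the generated ordering is the original min ordering; hence $H$ is a bi-arc digraph.
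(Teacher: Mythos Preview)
Your argument is correct. The forward direction is exactly the paper's proof in coordinates: where you chain $q_a<q_{a'}\le r_{b'}$ and $s_{b'}<s_b\le p_a$, the paper says the same thing geometrically (the counterclockwise end of $I_a$ is ``blocked'' from $J_{b'}$ by $J_b$, and that of $J_{b'}$ from $I_a$ by $I_{a'}$), with consistency playing the same role of transferring the order on clockwise ends of the $I$-arcs to those of the $J$-arcs.

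For the converse the paper does not invoke Theorem~\ref{main}; it builds the bi-arc model directly from a min ordering $<$, placing the clockwise ends of both $I_v$ and $J_v$ according to $<$ (which guarantees consistency) and then extending each $I_v$ (respectively $J_v$) counterclockwise as far as possible without touching $J_{O(v)}$ (respectively $I_{I(v)}$), with $O,I$ as in Proposition~\ref{nolem}. Your route via Theorem~\ref{main} followed by the order-isomorphisms $\phi,\psi$ is in effect the same construction, since the signed-interval model produced in the proof of Theorem~\ref{main} is precisely $(v,O(v),I(v))$ and your maps just place those values on the two halves of $C$. What your presentation buys is the explicit observation that a bi-arc model is nothing but a signed-interval model wrapped onto the circle, with the poles separating the ``$x$-half'' from the ``$y,z$-half''; this is really the conceptual content behind Corollary~\ref{rich}. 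The paper's direct construction, on the other hand, is self-contained and establishes Theorem~\ref{bia} independently of Theorem~\ref{main}.
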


\begin{proof}
Suppose $I_v, J_v$ form a bi-arc model of $H$. We claim that the ordering $<$ generated 
by $I_v, J_v$ is a min ordering of $H$. Indeed, suppose $a < a'$ and $b' < b$ have $ab, a'b' \in E(H)$. 
Then $I_{a'}$ spans the area of the circle between $N$ and the clockwise end of $I_a$, and $J_b$ spans 
the area of the circle between $S$ and the clockwise end of $J_{b'}$. (See Figure 1.) This implies that 
$I_a$ and $J_{b'}$ are disjoint: indeed, the counterclockwise end of $I_a$ is blocked from reaching $J_{b'}$ 
by $J_b$ (since $ab \in E(H)$), and the counterclockwise end of $J_{b'}$ is blocked from reaching $I_a$ by 
$I_{a'}$ (since $a'b' \in E(H)$). (The clockwise ends are fixed by the ordering $<$.)

\begin{figure}
\begin{center}
\includegraphics[height=4cm]{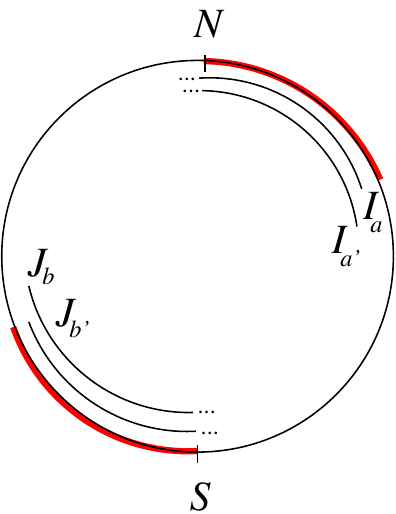}
\caption{Illustration for the proof of Theorem \ref{bia}}
\end{center}
\end{figure}

Conversely, suppose $<$ is a min ordering of $H$. We construct families of arcs $I_v$ and $J_v$, with 
$v \in V(H)$, as follows. The intervals $I_v$ will contain $N$ but not $S$, the intervals $J_v$ will contain 
$S$ but not $N$. The clockwise ends of $I_v$ are arranged in clockwise order according to $<$, as are the 
clockwise ends of $J_v$. The counterclockwise ends will now be organized so that $I_v, J_v, v \in V(H)$,
becomes a bi-arc model of $H$. For each vertex $v \in V(H)$, we define
$O(v)$ and $I(v)$ as in the proof of Theorem 1.
Then we assign the counterclockwise endpoint of $I_v$ to be N if $v$
has no out-neighbors, or else extend 
$I_v$ counterclockwise as far as possible without intersecting $J_{O(v)}$, and 
assign the the counterclockwise endpoint of each $J_v$ to be $S$
if $v$ has no in-neighbors, or else extend $J_v$ 
counterclockwise as far as possible without intersecting $I_{I(v)}$. We claim this is a bi-arc model 
of $H$. Clearly, if $b > O(a)$, then $I_a$ intersects $J_b$ by the construction, and similarly for $a > I(b)$ 
we have $J_b$ intersecting $I_a$. This leaves disjoint all pairs $I_a, J_b$ such that $a \leq I(b)$ and 
$b \leq O(a)$; since $aO(a), I(b)b \in E(H)$, the definition of min ordering implies that $ab \in E(H)$, as 
required.  
\end{proof}

\begin{corollary}\label{rich}
The following statements are equivalent for a digraph $H$.
\begin{itemize}
\item
$H$ has a min ordering
\item
$H$ is a signed-interval digraph
\item
$H$ is a bi-arc digraph.
\end{itemize}
\end{corollary}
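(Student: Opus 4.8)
The plan is to observe that this corollary requires no new work: it follows immediately by combining the two theorems already proven, both of which pivot on the property of admitting a min ordering. Theorem~\ref{main} establishes the equivalence of the first two bullet points, and Theorem~\ref{bia} establishes the equivalence of the first and third. Since both biconditionals share the condition ``$H$ has a min ordering'' as one of their two sides, I would simply chain them together through this common middle term.

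Concretely, first I would invoke Theorem~\ref{main} to record that $H$ has a min ordering if and only if $H$ is a signed-interval digraph. Next I would invoke Theorem~\ref{bia} to record that $H$ has a min ordering if and only if $H$ is a bi-arc digraph. These two biconditionals together place all three statements in a single equivalence class: each of the three properties implies that $H$ admits a min ordering, and admitting a min ordering in turn implies each of the other two. Hence the three statements are pairwise equivalent, which is exactly the assertion of the corollary.

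I do not expect any genuine obstacle here, since the corollary is a purely formal consequence of the preceding results; all of the substantive content lies in the two constructive proofs (translating a min ordering into a signed-interval model and back, and into a bi-arc model and back). The only point worth a remark is that a reader seeking a \emph{direct} dictionary between the signed-interval representation and the bi-arc representation, bypassing the min ordering entirely, can obtain one by composing the two constructions: a min ordering read off from a signed-interval model via the points $x_v$ yields a bi-arc model through the construction of Theorem~\ref{bia}, and conversely. No additional argument beyond this composition is needed.
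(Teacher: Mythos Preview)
Your proposal is correct and matches the paper's approach: the corollary is stated without proof, as it is an immediate consequence of combining Theorem~\ref{main} and Theorem~\ref{bia} through the shared min-ordering condition, exactly as you describe.
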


\section{0,1-Matrices and bipartite graphs}\label{matx}

Irreflexive graphs with at least one edge do not admit a min ordering, since the vertices that are not
reflexive form an independent set.  However, in the special
case of bipartite graphs, a version of min ordering has been studied, and has yielded interesting examples. We
will describe that version below, but we first want to explain how to view that definition as a special case of
min ordering as defined here.

A useful perspective on min orderings is obtained by considering 0,1-matrices. 
Square 0,1-matrices naturally correspond to adjacency matrices of digraphs. 
Let a {\em simultaneous permutation} of rows and columns of a matrix be one where
the permutation of the rows is the same as the permutation of the columns.
An {\em independent permutation} of rows and columns allows the permutation
of the rows to be different from the permutation of the columns.  

Let $L$ be the two by two matrix with rows 
$0 1$ and $1 1$, and let $K$ be the two by two matrix with rows $0 1$ and $1 0$. (These have been 
given other names in the literature, up to a simultaneous permutation of rows and columns they are 
the gamma matrix, and the identity matrix.) A matrix $M$ is called {\em $K, L$-free} if it does not contain 
$K$ or $L$ as a submatrix. If $M$ is the adjacency matrix of a digraph $H$, and if the rows and columns 
of $H$ are in the order $<$, then $M$ is $K, L$-free if and only if $<$ is a min ordering. 
We call a $M$ a {\em min-orderable matrix} if its rows and columns 
can be simultaneously permuted to produce a $K, L$-free matrix.  A digraph has a min ordering if and only 
if its adjacency matrix is min-orderable.

Therefore, we can say much about matrices that are min-orderable.

\begin{theorem}\label{veta}
A square 0,1-matrix is min-orderable if and only if it is the adjacency matrix of a signed-interval 
digraph.
\end{theorem}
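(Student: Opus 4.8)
The plan is to derive the statement as an essentially immediate consequence of Theorem~\ref{main}, by carefully matching the matrix-theoretic notion of min-orderability with the digraph-theoretic notion of admitting a min ordering. First I would fix the correspondence: a square $0,1$-matrix $M$ of order $n$ is the adjacency matrix of exactly one digraph $H$ once we identify the $i$-th row and $i$-th column with a vertex $v_i$, declaring $v_iv_j \in E(H)$ precisely when the $(i,j)$ entry of $M$ equals $1$. Saying that $M$ \emph{is the adjacency matrix of a signed-interval digraph} then means precisely that this $H$ is a signed-interval digraph.

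Next I would record the link between the two notions of ordering, which is already noted in the text preceding the theorem. A simultaneous permutation $\pi$ of the rows and columns of $M$ produces the adjacency matrix of $H$ with its vertices relabeled by $\pi$; equivalently, it endows $V(H)$ with a linear order. By the observation that, with rows and columns in the order $<$, the matrix is $K,L$-free if and only if $<$ is a min ordering, the permuted matrix is $K,L$-free exactly when the corresponding vertex order is a min ordering of $H$. Hence some simultaneous permutation turns $M$ into a $K,L$-free matrix, i.e.\ $M$ is min-orderable, if and only if $H$ admits a min ordering.

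Finally I would invoke Theorem~\ref{main}, which states that $H$ admits a min ordering if and only if $H$ is a signed-interval digraph. Chaining the two equivalences yields that $M$ is min-orderable if and only if $H$ is a signed-interval digraph, which is exactly the claim.

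I do not anticipate a genuine obstacle, since the content of the theorem is a translation of Theorem~\ref{main} into matrix language. The only point requiring care is to verify that it is a \emph{simultaneous} permutation of rows and columns, rather than an independent one, that corresponds to reordering the vertices of a single digraph; this is what makes min-orderability of the square matrix match the single-ordering definition of a min ordering, and distinguishes the present square-matrix setting from the independent-permutation setting relevant to bipartite graphs. Once that bookkeeping is in place the equivalence is immediate.
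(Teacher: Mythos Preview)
Your proposal is correct and follows exactly the paper's own approach: the paper does not give a separate proof of Theorem~\ref{veta} but presents it as an immediate restatement of Theorem~\ref{main} in matrix language, after observing that a digraph has a min ordering if and only if its adjacency matrix is min-orderable. Your write-up makes this translation explicit and is entirely in line with what the paper intends.
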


Another natural interpretation of a 0,1-matrix is that it represents adjacencies in a bipartite graph, with 
rows corresponding to one part and columns to the other part. 
The {\em bi-adjacency matrix} of a bipartite
graph $H$ with marts $A, B$ has its $i,j$-th entry is $1$ if and only if the $i$-th vertex in $A$ is 
adjacent to the $j$-th vertex in $B$. Note that for this interpretation it is not required that the 
matrix be square. For matrices that are not necessarily square, we can still ask for independent
permutations of rows and columns, to produce a $K, L$-free matrix. This suggests a definition of 
min ordering for bipartite graphs as follows. A {\em min ordering of a bipartite graph} $H$ with 
parts $A$ and $B$ is a linear ordering $<_A$ of $A$ and a linear ordering $<_B$ of $B$ so that for any 
$a, a' \in A, b, b' \in B$ such that $ab \in E(H), a'b' \in E(H)$ and $a < a', b' < b$ we have $ab' \in E(H)$.
This is the definition that has been used in the literature; it is clear how it avoids the problems of the
general definition.

There is a simple transformation that connects the two interpretations of 0,1-matrices. For a matrix $M$ 
with $k$ rows and $\ell$ columns, we define the $(k+\ell)$ by $(k+\ell)$ square matrix $M^+$ to contain 
the matrix $M$ in the first $k$ rows and the last $\ell$ columns, with $0$ everywhere else. Then a 
simultaneous row/column permutation of $M^+$ corresponds to independent row and column 
permutations of $M$. Note that the square matrix $M^+$ is an adjacency matrix of the digraph obtained
from $H$ by directing all edges from $H$ from the first part to the second part. Thus to view the special
definition of a min ordering for bipartite graphs as a particular case of the general definition, it suffices to
view bipartite graphs as digraphs with all edges oriented from the first part to the second part. We shall 
say that $H$ is a {\em bipartite digraph} if it is obtained from a bipartite graph in this way.

A robust class of bipartite graphs is relevant for our discussion. A bipartite graph $H$ with parts $A$
and $B$ is called a {\em two-directional orthogonal ray graph} if there exists a set $U_a, a \in A,$ of upwards 
vertical rays, and a set $R_b, b \in B,$ of horizontal rays to the right such that $ab \in E(H)$ if and
only if $U_a \cap R_b \neq \emptyset$. Note that we may, if needed, view a two-directional orthogonal ray graph
as a bipartite digraph, with all edges oriented from (say) vertical rays to horizontal rays.


The following theorem is obtained by a combination of results from \cite{fhh,jingx,ueno}.

\begin{theorem}\label{bipo}
The following statements are equivalent for a bipartite graph $H$.
\begin{itemize}
\item
$H$ is a two-directional orthogonal ray graph
\item
the complement of $H$ is a circular arc graph
\item
$H$ is an interval containment graph.
\end{itemize}
\end{theorem}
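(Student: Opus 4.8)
The plan is to prove the three equivalences as a cycle among the statements, labelling them in the order listed: (1) $H$ is a two-directional orthogonal ray graph, (2) the complement $\overline H$ is a circular arc graph, (3) $H$ is an interval containment graph. I would treat $(1)\Leftrightarrow(3)$ as a direct dictionary between the two geometric models (this is \cite{jingx}), and $(2)\Leftrightarrow(3)$ by wrapping the real line into a circle (this is \cite{fhh}, with input from \cite{ueno}). Throughout I would assume, after a generic perturbation, that all endpoints and ray coordinates are distinct, so that no ties need to be discussed.

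For $(1)\Leftrightarrow(3)$ I would read off the coordinates directly. An upward ray $U_a$ is determined by its horizontal position $c_a$ and its lower endpoint $s_a$, and a rightward ray $R_b$ by its vertical position $d_b$ and its left endpoint $t_b$; then $U_a\cap R_b\neq\emptyset$ exactly when $c_a\geq t_b$ and $d_b\geq s_a$. Setting $I_a=[s_a,c_a]$ and $J_b=[d_b,t_b]$ turns this into $\ell_a\leq\ell_b$ and $r_b\leq r_a$, i.e. $J_b\subseteq I_a$. The only subtlety is that $[s_a,c_a]$ and $[d_b,t_b]$ must be genuine intervals; since the horizontal and vertical scales are independent, I would first translate all horizontal coordinates by a large constant, which leaves every intersection relation unchanged while forcing $s_a\leq c_a$ and $d_b\leq t_b$. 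The reverse assignment $U_a=(r_a,\ell_a)$, $R_b=(r_b,\ell_b)$ needs no adjustment, since rays are always valid, so this equivalence is routine.

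For $(3)\Rightarrow(2)$ I would fix two antipodal points $N$ and $S$ and build circular arcs representing $\overline H$. Each $a\in A$ gets an arc $C_a$ containing $N$ but not $S$, and each $b\in B$ an arc $D_b$ containing $S$ but not $N$; the two semicircles between $N$ and $S$ are used, one to encode left endpoints and one to encode right endpoints. Arranging the reach of $C_a$ and $D_b$ on the first semicircle by a monotone function of $\ell_a,\ell_b$ and on the second by a monotone function of $r_a,r_b$, I can make $C_a\cap D_b\neq\emptyset$ hold exactly when $\ell_b<\ell_a$ or $r_b>r_a$, i.e. exactly when $J_b\not\subseteq I_a$, i.e. when $ab\in E(\overline H)$. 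Since any two $C_a$ share $N$ and any two $D_b$ share $S$, both $A$ and $B$ become cliques of $\overline H$, as required for the complement of a bipartite graph. This construction is in fact a bi-arc model in the sense of Theorem \ref{bia}, which links this direction to Corollary \ref{rich}.

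The hard part will be the converse $(2)\Rightarrow(3)$: starting from an \emph{arbitrary} circular-arc representation of $\overline H$, I must recover a linear containment model. Here $\overline H$ is co-bipartite, with $A$ and $B$ inducing cliques, and the goal is to undo the wrap by locating two cut points playing the roles of $N$ and $S$. The obstacle is that a clique in a circular-arc graph need not have a common point: its arcs may instead cover the whole circle. I would therefore argue that the cliques $A$ and $B$ can be separated by two points of the circle, using the structure of circular-arc representations of co-bipartite graphs, and then cut the circle at these points to unroll the arcs into intervals and read off the containment relation. The degenerate configurations where a clique's arcs cover the circle are the cases requiring the most care, and it is there that the input from \cite{fhh,ueno} is essential.
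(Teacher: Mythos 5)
First, note that the paper does not actually prove this theorem: it states only that the result ``is obtained by a combination of results from \cite{fhh,jingx,ueno}''. So any self-contained argument you give is already more than the paper offers. Your treatment of $(1)\Leftrightarrow(3)$ is correct and complete: the coordinate dictionary $U_a \leftrightarrow [s_a,c_a]$, $R_b \leftrightarrow [d_b,t_b]$ is exactly right, and translating the horizontal coordinates by a large constant correctly handles the requirement that the intervals be genuine. Your construction for $(3)\Rightarrow(2)$ is also sound: placing the $A$-arcs around $N$ and the $B$-arcs around $S$, with reaches on the two semicircles governed monotonically by left and right endpoints respectively, does give $C_a\cap D_b\neq\emptyset$ exactly when $J_b\not\subseteq I_a$, and the two parts become cliques as needed.

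The gap is in $(2)\Rightarrow(3)$, and it is not a small one. The claim that the two cliques ``can be separated by two points of the circle'' --- that is, that a co-bipartite circular-arc graph admits a representation in which all $A$-arcs share a point $N$, all $B$-arcs share a point $S$, and no arc contains both --- is essentially equivalent to the theorem itself: once such a normalized representation exists, reading off the containment model is the same easy computation you performed in the forward direction. You give no argument for why the normalization exists, and it cannot be obtained by a routine perturbation, since cliques of circular-arc graphs do not satisfy the Helly property and an arbitrary representation may have either clique's arcs covering the whole circle with no common point. The known proofs in \cite{fhh,ueno} do not normalize the representation directly; they pass through a forbidden-structure characterization (no edge-asteroid and no induced cycle of length at least six) that both classes are separately shown to match, which is a substantial combinatorial argument. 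As written, your proof of the hard direction restates the difficulty rather than resolving it; you should either import the normalization explicitly as a cited lemma or acknowledge that this implication is being quoted from the literature, exactly as the paper itself does.
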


Matrices that can be permuted to avoid small submatrices have been of much interest \cite{farber,woeg,anna}.
This of course corresponds to characterizations of digraphs by forbidden ordered subgraphs \cite{damas,bojan}.
Our focus was on $K, L$-free matrices. Let the matrix $\Gamma$ be obtained from $L$ by simultaneously 
exchanging the rows and columns; i.e., $\Gamma$ has rows $1 1$, $1 0$. Let $I$ be the two by two identity
matrix. It is easy to see that considering $I, \Gamma$-free matrices is equivalent to considering $K, L$-free 
matrices, as the permutation that simultaneously reverses rows and columns of matrix $M$ transforms a
$I, \Gamma$-free matrix to a $K, L$-free matrix and vice versa. Matrices that are $\Gamma$-free have been
intensively studied \cite{farber,anna}, cf. \cite{spin}. A bipartite graph $H$ is {\em chordal bipartite} if it contains 
no induced cycle other than $C_4$. A reflexive graph is {\em strongly chordal} if it contains no induced cycle or 
induced trampoline. (A {\em trampoline} is a complete graph on $v_0, v_1, v_2, \dots, v_{k-1}, k > 2$ with vertices 
$u_i, i = 0, 1, \dots, k-1,$ each only adjacent to $v_i, v_{i+1}$, subscripts computed modulo $k$.) The adjacency
matrix of a reflexive graph $H$ can be made $\Gamma$-free by simultaneous row / column permutations if and only 
if $H$ is strongly chordal; the bi-adjacency matrix of a bipartite graph $H$ can be made $\Gamma$-free by independent
permutations of rows and columns if and only if $H$ is chordal bipartite \cite{farber}. These results amount to
forbidden structure characterizations of matrices that are permutable (by simultaneous or independent row and
column permutations) to a $\Gamma$-free format. Algorithms to recognize such matrices efficiently have been
given in \cite{anna,tarjan}. For $L$-free matrices, or equivalently, for $I$-free matrices a forbidden structure
characterization is given in \cite{jing}. An $O(n^2)$ recognition algorithm is claimed in \cite{rus}, cf. \cite{spin}.

\section{Special cases}

We now explore what min orderings look like in the special cases we have discussed, namely reflexive
graphs, reflexive digraphs, undirected graphs, and bipartite graphs. The results are all corollaries 
of Theorem \ref{main} and Proposition \ref{nolem}. 

\begin{corollary}
A reflexive graph $H$ is a signed-interval digraph if and only if it is an interval graph.
\end{corollary}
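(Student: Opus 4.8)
The plan is to prove both implications, with the backward one essentially immediate and the forward one carrying the real content. For the backward direction, if $H$ is an interval graph then it is reflexive by the very definition of interval graph, and the discussion preceding Theorem~\ref{main} already exhibits it as a signed-interval digraph: one takes $x_v \le y_v = z_v$ so that the source interval $[x_v,y_v]$ and sink interval $[x_v,z_v]$ both coincide with the given interval model. So I would dispose of this direction in a single sentence.

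For the forward direction, suppose $H$ is a reflexive graph that is a signed-interval digraph. By Theorem~\ref{main}, $H$ has a min ordering $<$; I would prepend the placeholder $\alpha$ and form $O(v)$ and $I(v)$ as in Proposition~\ref{nolem}. The crucial observation is that, since a graph is a \emph{symmetric} digraph, the out-neighbours of each vertex coincide with its in-neighbours, so $O(v) = I(v)$ for every $v$; write $f(v)$ for this common value. Moreover, because $H$ is \emph{reflexive} we have $vv \in E(H)$, so $v$ is one of its own neighbours and hence $f(v) \ge v$ (in particular $f(v) \ne \alpha$). With these two reductions, Proposition~\ref{nolem} reads: $ab \in E(H)$ if and only if $a \le f(b)$ and $b \le f(a)$.

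Identifying each vertex with its rank in $<$, I would assign to $v$ the interval $[v, f(v)]$, which is a genuine positive interval precisely because $f(v) \ge v$. It then remains to check that this family is an interval model of $H$. Taking $a \le b$ without loss of generality, the condition $a \le f(b)$ is automatic (as $a \le b \le f(b)$), so the edge condition collapses to $b \le f(a)$, which is exactly the statement that $[a,f(a)]$ and $[b,f(b)]$ intersect. Hence $H$ is the intersection graph of these intervals, i.e.\ an interval graph.

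The step to get right—what I would flag as the heart of the argument—is the reduction $O(v)=I(v)=f(v)$ together with $f(v)\ge v$: this is where symmetry and reflexivity are used to collapse the general signed-interval model (which may have negative intervals, and distinct source and sink intervals) down to a single ordinary positive interval per vertex. Once that collapse is justified, verifying the interval model is only a routine case check on the relative order of the two left endpoints, so I do not expect any further obstacle.
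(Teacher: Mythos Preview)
Your proof is correct and follows exactly the approach the paper intends: the paper states this corollary without proof, remarking only that it follows from Theorem~\ref{main} and Proposition~\ref{nolem}, and your argument is precisely that derivation---it is the reflexive specialization of the proof the paper does spell out for Corollary~\ref{lookhere} (co-TT graphs), with the extra observation that reflexivity forces $f(v)\ge v$ so that each $[v,f(v)]$ is a genuine positive interval.
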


\begin{corollary}
A reflexive digraph $H$ is a signed-interval digraph if and only if it is an adjusted interval digraph.
\end{corollary}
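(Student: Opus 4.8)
The plan is to exploit that the signed-interval adjacency rule $uv \in E(H)$ iff $x_u \le z_v$ and $x_v \le y_u$ is exactly the statement that the source interval $[x_u,y_u]$ of $u$ meets the sink interval $[x_v,z_v]$ of $v$, so that a signed-interval model differs from an adjusted interval model \emph{only} in whether every interval is positive, i.e.\ in whether $x_v \le y_v$ and $x_v \le z_v$ hold for all $v$. The corollary therefore reduces to the single question of when reflexivity of $H$ coincides with positivity of the model.

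For the (easy) ``if'' direction, I would note that an adjusted interval digraph is a signed-interval digraph --- its intersection condition is literally the signed-interval rule --- and that it is reflexive, since the positive source interval $[x_v,y_v]$ and positive sink interval $[x_v,z_v]$ share the endpoint $x_v$ and hence meet, forcing $vv \in E(H)$.

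For the converse, the key step is to read reflexivity off an arbitrary model. If $H$ is reflexive and $v \to x_v,\, v\to y_v,\, v\to z_v$ is any signed-interval model of $H$ (one exists by hypothesis), then $vv \in E(H)$ immediately gives $x_v \le z_v$ and $x_v \le y_v$ for every $v$; hence every source and sink interval is already positive, so the given model is in fact an adjusted interval model and $H$ is an adjusted interval digraph. Equivalently, invoking Theorem~\ref{main} I may instead use the canonical model built there from a min ordering $<$, in which $v$ is represented by $[v,O(v)]$ and $[v,I(v)]$: reflexivity yields $vv \in E(H)$, so Proposition~\ref{nolem} forces $v \le I(v)$ and $v \le O(v)$, again making both intervals positive.

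I expect no real obstacle here; the entire content is the observation that the loop condition $vv \in E(H)$ unwinds precisely to $x_v \le y_v$ and $x_v \le z_v$. The only points requiring care are not to presume the model positive from the outset --- positivity must be \emph{derived} from reflexivity of $H$ rather than assumed --- and to confirm that the match with the adjusted interval digraph definition includes the shared-left-endpoint (``adjusted'') requirement, which is automatic since the source and sink intervals of $v$ already share $x_v$ by construction.
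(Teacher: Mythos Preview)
Your proposal is correct and matches the paper's intended argument. The paper does not spell out a proof for this particular corollary, stating only that it follows from Theorem~\ref{main} and Proposition~\ref{nolem}; your second route via the canonical model $[v,O(v)]$, $[v,I(v)]$ is exactly that derivation (and parallels the explicit proof the paper gives for the co-TT case in Corollary~\ref{lookhere}), while your first route---observing directly that $vv\in E(H)$ forces $x_v\le y_v$ and $x_v\le z_v$ in \emph{any} signed-interval model---is a slightly more direct variant of the same idea. One small wording issue: your opening sentence asserts that the signed-interval rule ``is exactly the statement that the source interval of $u$ meets the sink interval of $v$,'' but that equivalence holds only once both intervals are known to be positive; you clearly understand this (you later caution against presuming positivity), so just tighten the phrasing there.
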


Next we focus on symmetric digraphs, i.e., graphs.

\begin{corollary}\label{lookhere}
A graph $H$ is a signed-interval digraph, i.e., has a min ordering, if and only if it is a co-TT graph.
\end{corollary}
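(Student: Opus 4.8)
The plan is to prove both implications using the characterizations already established, namely Theorem~\ref{main} (a digraph has a min ordering precisely when it is a signed-interval digraph) and Proposition~\ref{nolem} (the $O/I$ reformulation of the min-ordering property). Throughout I view the graph $H$ as a symmetric digraph, so that $ab \in E(H)$ if and only if $ba \in E(H)$.

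For the direction that a co-TT graph is a signed-interval digraph, I would start from a co-TT model $x_v, y_v$ and simply set $z_v := y_v$ for every $v$. Then the signed-interval adjacency condition $x_u \le z_v$ and $x_v \le y_u$ becomes exactly the co-TT condition $x_u \le y_v$ and $x_v \le y_u$, which is symmetric in $u$ and $v$; hence this signed-interval model realizes $H$, and by Theorem~\ref{main} $H$ has a min ordering. Loops are handled automatically, since $vv \in E(H)$ corresponds to $x_v \le y_v$.

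The substantive direction is that a symmetric signed-interval digraph is co-TT. Here I would take a min ordering $<$ of $H$, which exists by Theorem~\ref{main}, and invoke Proposition~\ref{nolem}. The key observation is that because $E(H)$ is symmetric, the out-neighborhood and the in-neighborhood of each vertex coincide as sets of vertices, so the last out-neighbor and the last in-neighbor are the same: $O(v) = I(v)$ for every $v \in V(H)$. Consequently the signed-interval model produced in the proof of Theorem~\ref{main}, given by $x_v = v$, $y_v = O(v)$, $z_v = I(v)$, already satisfies $y_v = z_v$. Writing $y_v$ for this common value, the adjacency rule of Proposition~\ref{nolem}, $ab \in E(H)$ iff $a \le I(b)$ and $b \le O(a)$, reads $x_a \le y_b$ and $x_b \le y_a$, which is precisely a co-TT model. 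Thus $H$ is a co-TT graph.

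I expect the only real content to be the observation that symmetry forces $O(v) = I(v)$; once the two families of intervals in the signed-interval model collapse to a single family, the co-TT condition falls out directly. No genuine obstacle remains beyond checking that this collapse is consistent with the treatment of loops, which it is, since $v$ carries a loop exactly when $v \le O(v) = I(v)$, matching the co-TT inequality $x_v \le y_v$.
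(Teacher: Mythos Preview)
Your proposal is correct and follows essentially the same argument as the paper: set $z_v=y_v$ to turn a co-TT model into a signed-interval model, and for the converse use a min ordering, observe that symmetry forces $O(v)=I(v)$, and read off the co-TT model $x_v=v,\ y_v=O(v)$ via Proposition~\ref{nolem}. There is nothing to add.
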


\begin{proof}
Consider a co-TT model of $H$, given by the mappings $v \to x_v, v \to y_v$, setting the third mapping $v \to z_v$
with each $z_v = y_v$, yields a signed-interval digraph model of $H$. Conversely, assume $H$ is a 
graph, i.e., a symmetric digraph, that is a signed-interval digraph. Let $<$ be a min ordering of $H$; 
we again have $O(v)=I(v)$ for all vertices $v$. We claim that the mappings $v \to x_v=v, v \to y_v=O(v)$ define 
a co-TT model. Indeed, from Proposition \ref{nolem} we have $ab \in E(H)$ if and only if $a \leq O(b)=y_b$ and 
$b \leq O(a)=y_a$, as required.
\end{proof}

Finally, for bipartite graphs we have the following result, stated for convenience in the language of bipartite digraphs.
Note that this has only one consequence, that is, when we consider an edge $ab \in E(H)$ we always assume $a \in A$
and $b \in B$. 

\begin{corollary}
A bipartite digraph $H$ is a signed-interval digraph, i.e., has a min ordering, if and only if it is a two-directional orthogonal ray graph.
\end{corollary}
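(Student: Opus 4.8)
The plan is to exhibit a direct dictionary between a signed-interval model of $H$ and a two-directional orthogonal ray representation, using Corollary~\ref{rich} to identify signed-interval digraphs with the digraphs admitting a min ordering. Throughout, $H$ is a bipartite digraph with parts $A$ and $B$ and every arc oriented from $A$ to $B$, so the only edges to account for are arcs $ab$ with $a\in A$, $b\in B$, and the defining edge condition of a signed-interval model, namely $ab\in E(H)$ iff $x_a\le z_b$ and $x_b\le y_a$, must hold exactly for such pairs. The key observation is that the two inequalities $x_a\le z_b$ and $x_b\le y_a$ are precisely the two inequalities governing whether an upward vertical ray meets a rightward horizontal ray: if $U_a$ emanates upward from $(y_a,x_a)$ and $R_b$ emanates rightward from $(x_b,z_b)$, then $U_a\cap R_b\neq\emptyset$ iff $y_a\ge x_b$ and $z_b\ge x_a$.

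For the forward direction I would assume $H$ is a signed-interval digraph with model $v\to x_v,y_v,z_v$. For each $a\in A$ I place the upward vertical ray $U_a$ with base $(y_a,x_a)$, and for each $b\in B$ the rightward horizontal ray $R_b$ with base $(x_b,z_b)$. By the observation above, $U_a\cap R_b\neq\emptyset$ iff $x_a\le z_b$ and $x_b\le y_a$, which by hypothesis is exactly the condition $ab\in E(H)$. Since a two-directional orthogonal ray graph records adjacency only between the two ray families, this presentation certifies $H$ as such a graph.

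For the converse I would start from a ray representation, reading off $x_a=\beta_a$, $y_a=\alpha_a$ for each $a\in A$ (where $U_a$ has base $(\alpha_a,\beta_a)$) and $x_b=\gamma_b$, $z_b=\delta_b$ for each $b\in B$ (where $R_b$ has base $(\gamma_b,\delta_b)$). The coordinates $z_a$ (for $a\in A$) and $y_b$ (for $b\in B$) are not determined by the $A$--$B$ adjacencies, and this is where the only real work lies: I must choose them to suppress every arc that a signed-interval model could otherwise create but that a bipartite digraph forbids. Setting each such $z_a$ and each such $y_b$ strictly below $\min_v x_v$ does the job, since $z_a$ small kills all arcs into $a$ (in particular arcs within $A$, arcs $ba$, and the loop $aa$), while $y_b$ small kills all arcs out of $b$ (arcs within $B$, arcs $ba$ again, and the loop $bb$). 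One then checks that for $a\in A$, $b\in B$ the surviving condition $x_a\le z_b$ and $x_b\le y_a$ reproduces exactly $U_a\cap R_b\neq\emptyset$, so the model represents $H$.

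The main obstacle is precisely this bookkeeping in the converse: because a signed-interval model assigns all three coordinates to every vertex and in principle allows arcs in both directions as well as loops, I must verify that the two ``free'' coordinates can be pushed to $-\infty$ without disturbing the genuine $A$--$B$ arcs while eliminating every spurious adjacency; all remaining steps are a transparent change of variables. Alternatively, one could route the argument through Theorem~\ref{bipo}, translating a signed-interval model into an interval containment representation by reading $I_a=[x_a,y_a]$ for $a\in A$ and the reversed interval $J_b=[z_b,x_b]$ for $b\in B$ (so that $J_b\subseteq I_a$ iff $x_a\le z_b$ and $x_b\le y_a$), but the ray formulation keeps the dictionary most transparent because rays are automatically well formed and no interval-validity condition $x_v\le y_v$ need be checked.
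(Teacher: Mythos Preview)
Your forward direction is identical to the paper's: both place the upward ray $U_a$ at base $(y_a,x_a)$ and the rightward ray $R_b$ at base $(x_b,z_b)$, and read off that ray intersection is equivalent to the signed-interval edge condition $x_a\le z_b$, $x_b\le y_a$.

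For the converse you take a genuinely different route. The paper does \emph{not} build a signed-interval model directly from the rays; instead it defines orderings on $A$ and $B$ (by the $y$-coordinate of the base of $U_a$ and the $x$-coordinate of the base of $R_b$, respectively), checks by a short two-case analysis that this is a min ordering of the bipartite digraph, and then invokes Theorem~\ref{main}. Your argument bypasses min orderings entirely: you read $x_a,y_a$ off the base of $U_a$ and $x_b,z_b$ off the base of $R_b$, then push the two free coordinates $z_a$ and $y_b$ below $\min_v x_v$ to kill every arc a bipartite digraph forbids. This is correct---your case check that all loops, $A$--$A$ arcs, $B$--$B$ arcs, and $B$--$A$ arcs are suppressed goes through exactly as you describe---and it has the virtue of being self-contained. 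The paper's route, on the other hand, keeps the min-ordering viewpoint front and centre (which is the organizing theme of the paper) and trades your free-coordinate bookkeeping for a brief verification that the forbidden pattern cannot occur; it also makes the dependence on Theorem~\ref{main} explicit, which is natural given that the result is billed as a corollary of that theorem.
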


Note that two-directional orthogonal ray graphs themselves have two other equivalent characterizations in Theorem \ref{bipo}.
The characterization of two-directional orthogonal ray graph by the existence of a min ordering is also observed in \cite{esa2012,ueno}.

\begin{proof}
Suppose $H$ has a signed-interval model given by the three mappings 
$v \to x_v, v \to y_v, v \to z_v$ such that 
$ab \in E(H)$ if and only if $x_a \leq z_b$ and $x_b \leq y_a$. We construct a two-directional ray model for $H$ as 
follows. For each $a \in A$, we take an upwards vertical ray starting in the point $P_a$ with $x$-coordinate equal to 
$y_a$ and with $y$-coordinate equal to $x_a$. For each $b \in B$, we take a horizontal ray to the right, starting in 
the point $Q_b$ with $x$-coordinate $x_b$ and $y$-coordinate $z_b$. Now $P_a$ intersects $Q_b$ if and only if 
$x_b \leq y_a$ and $x_a \leq z_b$, i.e., if and only if $ab \in E(H)$ as required.

Now suppose that $H$ has a two-directional model, i.e., upwards vertical rays $U_a, a \in A,$ and horizontal rays 
to the right $R_b, b \in B,$ such that $ab \in E(H)$ if and only if $U_a \cap R_b \neq \emptyset$. We will prove that
$H$ has a min ordering, whence it is a signed-interval digraph by Theorem \ref{main}. We will define the orders 
$<$ on $A$ and on $B$ as follows. Assume the starting point of the vertical ray $U_a$ has the $(x,y)$-coordinates 
$(u_a,v_a)$, and the starting point of the horizontal ray $R_b$ has the $(x,y)$-coordinates $(r_b,s_b)$, for $a \in A,$ 
and $b \in B$. It is easy to see that we may assume, without loss of generality, that all $u_a, a \in A,$ and $r_b, b \in B$ 
are distinct, and similarly for $v_a, a \in A$ and $s_b, b \in B$. We define $a < a'$ in $A$ if and only if $v_a < v_a'$, 
and define $b < b'$ in $B$ if and only if $r_b < r_{b'}$. We show that this is a min ordering of the bipartite digraph 
$H$. Otherwise, some $ab \in E(H), a'b' \in E(H), a < a', b' < b$ have $ab' \not\in E(H)$. There are two possibilities 
for $ab' \not\in E(H)$; either $u_a < r_{b'}$ or $u_a > r_{b'}, v_a > s_{b'}$. In the former case, $U_a \cap R_b = \emptyset$, 
in the latter case $U_{a'} \cap R_{b'} = \emptyset$, contradicting the assumptions. 
\end{proof}

\section{Algorithms and characterizations}

Interval graphs are known to have elegant characterization theorems \cite{fg,lekk}, cf. \cite{gol,spin} and efficient 
recognition algorithms \cite{booth,corneil,habib}. 
Thus one might hope to be able to obtain similar results for their generalizations and digraph analogues. This is true
for all the generalizations described in this paper, at least to some degree. In this section we summarize what is known.

The prototypical characterization of interval graphs is the theorem of Lekkerkerker and Boland \cite{lekk}. In our language,
it states that {\em a reflexive graph $H$ is an interval graph if and only if it contains no asteroidal triple and no induced
$C_4$ or $C_5$}. An {\em asteroidal triple} consists of three non-adjacent vertices such that any two are joined by a 
path not containing any neighbors of the third vertex. An equivalent characterization by the absence of a slightly less 
concise obstruction is given in \cite{adj}. {\em A reflexive graph $H$ is an interval graph if and only if it contains no
invertible pair.} An {\em invertible pair} is a pair of vertices $u, v$ such that there exist two walks of equal length, $P$ 
from $u$ to $v$, and $Q$ from $v$ to $u$, where the $i$-th vertex of $P$ is non-adjacent to the $(i+1)$-st vertex of $Q$
(for each $i$), and also two walks of equal length $R, S$ from $v$ to $u$ and $u$ to $v$ respectively, where the $i$-th
vertex of $R$ is non-adjacent to the $(i+1)$-st vertex of $S$ (for each $i$). It is not difficult to see that an asteroidal triple
is a special case of an invertible pair. A number of variants of the definition of an invertible pair have arisen \cite{adj,ross,jing,esa2012}, 
and they have proved useful to give characterization theorems for various classes. It is proved in \cite{adj} that 
{\em a reflexive digraph is an adjusted interval digraph if and only if it contains no directed invertible pair.}  A {\em 
directed version} of an invertible pair is defined in \cite{adj} in a manner similar to the above definition of an invertible 
pair. With yet another {\em labeled version} of an invertible pair, we have the following obstruction characterization
of co-TT graphs: {\em a graph is a co-TT graph if and only if it contains no labeled invertible pair}, which follows
from the characterization in~\cite{ross} in terms of an interval ordering from~\cite{circ}.
For bipartite graphs, an analogous {\em bipartite version} of an invertible pair yields the following result.
{\em A bipartite graph is a two-directional orthogonal ray graph if and only if it contains no bipartite invertible pair}, \cite{esa2012}.
In fact, in \cite{fhh} a stronger version is shown: there is a bipartite analogue of an asteroidal triple, called an {\em edge-asteroid},
and {\em a bipartite graph is a two-directional orthogonal ray graph if and only if it contains no induced $6$-cycle and no edge-asteroid}.
Finally, in \cite{arxiv}, there is an obstruction characterization for signed-interval digraphs, which is a little more technical than
just an invertible pair, \cite{arxiv}.  

There is a long history of efficient algorithms for the recognition of interval graphs, many of them linear time, starting from 
\cite{booth} and culminating in \cite{corneil}. A polynomial time algorithm for the recognition of adjusted interval digraphs 
is given in \cite{adj}. It is not known how to obtain a linear time, or even near-linear time algorithm. An $O(n^2)$ algorithm 
for the recognition of two-directional orthogonal ray graphs follows from Theorem \ref{bipo} and \cite{circ}. A more efficient 
algorithm in this case is also not known. On the other hand, an $O(n^2)$ algorithm for the recognition of co-TT graphs 
has been given in \cite{ross}. In~\cite{arxiv}, a polynomial-time algorithm
for the recognition of a signed-interval digraph is proposed.  (A new
version of~\cite{arxiv} will be posted on arXiv soon.)


\begin{thebibliography}{999}

\bibitem{farber} 
R. P. Anstee and M. Farber,
Characterizations of totally balanced matrices,
		{\em J. Algorithms} 5 (1984) 215--230.

\bibitem{rus}
V. L. Beresnev and A. I. Davydov,
On matrices with connectedness properties,
		{\em Upravlyaemye Sistemy} 19 (1979) 3--13.

\bibitem{booth}
K. S. Booth and G. S. Lueker,
Testing for the consecutive ones property, interval graphs, and graph planarity using PQ-tree algorithms,
		{\em J. Computer and System Sci.} 13 (1976) 335--379.

\bibitem{buljeav}
A. Bulatov, P. Jeavons, and A. Krokhin,
Classifying the complexity of constraints using finite algebras, 
		{\em SIAM J. Computing} 34 (2005) 720--742.

\bibitem{corneil}
D. G. Corneil, S. Olariu, and L. Stewart,
The LBFS structure and recognition of interval graphs,
		{\em SIAM J. Discrete Math.} 23 (2009) 1905--1953.

\bibitem{hammer} V. Chv\'atal and P. L. Hammer, Set-packing and threshold graphs,
	{\em Univ. Waterloo Res. Report}, (1973) CORR 73-21.

\bibitem{damas}
P. Damaschke,
Forbidden ordered subgraphs,
		{\em Topics in Combinatorics and Graph Theory} (1990) 19--229.

\bibitem{martin}
M. Farber,
Domination, independent domination, and duality in strongly chordal graphs,
		{\em Discrete Applied Mathematics} 7 (1984) 115--130.

\bibitem{fh}
T. Feder and P. Hell,
List homomorphisms to reflexive graphs,
		{\em J. Combinatorial Theory B} 72 (1998) 236--250.

\bibitem{fhh}
T. Feder, P. Hell, and J. Huang,
List homomorphisms and circular arc graphs,
		{\em Combinatorica} 19 (1999) 487--505.

\bibitem{fhh2}
T. Feder, P. Hell, and J. Huang,
Bi-arc graphs and the complexity of list homomorphisms,
		{\em J. Graph Theory} 42 (2003) 61--80.

\bibitem{adj} 
T. Feder, P. Hell, J. Huang, and A. Rafiey. 
Interval graphs, adjusted interval digraphs, and reflexive list homomorphisms,
		{\em Discrete Applied Mathematics} 160 (2012) 697--707.

\bibitem{fg}
D. R. Fulkerson, and O. A. Gross. 
Incidence matrices and interval graphs,
		{\em Pacific J. Math.} 15 (1965) 835--855.

\bibitem{golovach}
P. Golovach, P. Heggerness. R. M. McConnell, V. F. dos Santos, J. P. Spinrad, J. L. Szwarcfiter,
On recognition of threshold tolerance graphs and their complements,
{\em Discrete Applied Mathematics} 216 (2017) 171--180.

\bibitem{gol}
M. C. Golumbic. 
{\em  Algorithmic Graph Theory and Perfect Graphs},
Academic Press, New York (1980).

\bibitem{split}
M. C. Golumbic, N. L. Weingarten and V. Limouzy,
Co-TT graphs and a characterization of split co-TT graphs,
{\em Discrete Applied Mathematics} 165 (2014) 168--174.

\bibitem{emmo}
W. Gutjahr, E. Welzl, and G. J. Woeginger,
Polynomial graph-colourings,
{\em Discrete Applied Mathematics} 35 (1992) 29--45.

\bibitem{habib}
M. Habib, R. McConnell, C. Paul, and L. Viennot,
Lex-BFS and partition refinement, with applications to transitive orientation,
interval graph recognition and consecutive ones testing,
{\em Theoretical Computer Science} 234 (2000) 59--84.

\bibitem{jing1} P. Hell, and J. Huang,
Interval bigraphs and circular arc graphs,
{\em J. Graph Theory} 46 (2004) 313--327.

\bibitem{ross}
P. Golovach, P. Heggernes, N. Lindzey, R.~M. McConnell, V. dos Santos, J.~P. Spinrad,
On recognition of threshold tolerance graphs and their complements,
{\em Discrete Applied Mathematics} 216 (2017) 171--180.


\bibitem{jing}
P. Hell, J. Huang, R.~M. McConnell and J. Lin,
Comparability graphs coverable by two cliques, and cocomparability bigraphs,
manuscript 2018.

\bibitem{bojan}
P. Hell, B. Mohar and A. Rafiey,
Orderings without forbidden patterns,
In: Schulz A.S., Wagner D. (eds) Algorithms - ESA 2014. ESA 2014,
{\em Lecture Notes in Computer Science}, vol 8737, Springer, Berlin, Heidelberg.

\bibitem{esa2012} 
P. Hell, M. Mastrolilli, M. M. Nevisi and A. Rafiey.  
Approximation of minimum cost homomorphisms,
{\em ESA 2012}, 587--598.

\bibitem{hombook}
P. Hell and J. Ne\v{s}et\v{r}il
{\em Graph Homomorphisms},
Wiley 2004.

\bibitem{monoton-proper} 
P. Hell and A. Rafiey,
Monotone proper interval digraphs,
{\em SIAM J. Discrete Math.} 26(4) (2012) 1576-1596.

\bibitem{arxiv} 
P. Hell and A. Rafiey,
Bi-arc digraphs and conservative polymorphisms,
arXiv:1608.03368

\bibitem{jingx}
J. Huang, 
Representation characterizations of chordal bipartite graphs, 
{\em J. Combinatorial Theory B} 96 (2006) 673--683.

\bibitem{woeg}
B. Klintz, R. Rudolf, G. J. Woeginger,
Permuting matrices to avoid forbidden submatrices,
{\em Discrete Applied Mathematics} 60 (1995) 223--248.

\bibitem{lekk}
C. G. Lekkerkerker and J. C. Boland,
Representation of a finite graph by a set of intervals on the real line,
{\em Fundamenta Math.} 51 (1962) 45--64.

\bibitem{anna}
A. Lubiw, 
Doubly lexical orderings of matrices,
{\em SIAM J. Comput.} 16 (1987) 854--879.

\bibitem{circ}
R. M. McConnell,
Linear-time recognition of circular-arc graphs,
{\em Algorithmica} 37 (2003) 93--147.

\bibitem{bruce}
C. L. Monma, B. Reed and W. T. Trotter,
Threshold tolerance graphs,
{\em J. Graph Theory} 12 (1988) 343--362.

\bibitem{tarjan}
R. Paige and R. E. Tarjan,
Three partition refinement algorithms, 
{\em SIAM J. Comput.} 16 (1987) 973--989.

\bibitem{ueno}
A. M. Shresta, S. Tayu, S. Ueno,
On two-directional orthogonal ray graphs,
{\em Proceedings of 2010 IEEE International Symposium on Circuits and Systems}, pp 1807--1810.

\bibitem{sen-west}
M. Sen, S. Das, A. B. Roy, and D. B. West,
Interval digraphs: an analogue of interval graphs, 
{\em J. Graph Theory} 13 (1989) 581--592.

\bibitem{spin} J. Spinrad,
{\em Efficient Graph Representations},
Fields Institute Monographs, AMS 2003.

\end{thebibliography}

\vspace{4mm}
\end{document}